\title{Hardness of some variants of the graph coloring game}
\author{
   Thiago Marcilon \inst{1}\and
   Nicolas Martins \inst{2}\and
   Rudini Sampaio\inst{3}
}
\institute{
   Centro de Ciências e Tecnologia, Univ. Federal do Cariri, Juazeiro do Norte, Brazil.\\\email{thiago.marcilon@ufca.edu.br}
   \and
   Univ. Integração Internacional Lusofonia Afrobrasileira Unilab, Redenção, Brazil.\\\email{nicolasam@unilab.edu.br}
   \and
   Departamento de Computa\c c\~ao, Universidade Federal do Cear\'a, Fortaleza, Brazil.\\\email{rudini@dc.ufc.br}
}
\begin{document}
\mainmatter
\maketitle

\begin{abstract}
Very recently, a long-standing open question proposed by Bodlaender in 1991 was answered: the graph coloring game is PSPACE-complete. In 2019, Andres and Lock proposed five variants of the graph coloring game and left open the question of PSPACE-hardness related to them. In this paper, we prove that these variants are PSPACE-complete for the graph coloring game and also for the greedy coloring game, even if the number of colors is the chromatic number. Finally, we also prove that a connected version of the graph coloring game, proposed by Charpentier et al. in 2019, is PSPACE-complete.
\end{abstract}

\keywords{Coloring game, game chromatic number, greedy coloring, Grundy number, PSPACE}

\section{Introduction}\label{intro}

In the graph coloring game, given a graph $G$ and a set $C$ of integers (representing the color set), two players (Alice and Bob) alternate turns (starting with Alice) in choosing an uncolored vertex to be colored by an integer of $C$ not already assigned to one of its colored neighbors. In the greedy coloring game, there is one additional constraint: the vertices must be colored by the least possible integer of $C$. Alice wins if all vertices are successfully colored. Otherwise, Bob wins the game.
From the classical Zermelo-von Neumann theorem, one of the two players has a winning strategy, since it is a finite game without draw.
Thus, the game chromatic number $\chi_g(G)$ and the game Grundy number $\Gamma_g(G)$ are defined as the least numbers of colors in the set $C$ for which Alice has a winning strategy in the graph coloring game and the greedy coloring game, respectively.

Clearly, $\chi(G)\leq\chi_g(G)$ and $\chi(G)\leq\Gamma_g(G)\leq\Gamma(G)$, where $\chi(G)$ is the chromatic number of $G$ and $\Gamma(G)$ is the Grundy number of $G$ (the maximum number of colors that can be used by a greedy coloring of $G$).

The graph coloring game was first considered by Brams about 38 years ago in the context of coloring maps and was described by Gardner in 1981 in his ``\emph{Mathematical Games}'' column of Scientific American \cite{gardner81}. It remained unnoticed until Bodlaender \cite{bodlaender91} reinvented it in 1991.

Since then, the graph coloring game became a very active topic of research.
In 1993, Faigle et al. \cite{faigle93} proved that $\chi_g(G)\leq 4$ in forests and, in 2007, Sidorowicz \cite{game-cactus07} proved that $\chi_g(G)\leq 5$ in cacti. In 1994, Kierstead and Trotter \cite{kierstad94} proved that $\chi_g(G)\leq 7$ in outerplanar graphs. In 1999, Dinski and Zhu proved that $\chi_g(G)\leq k(k+1)$ for every graph with acyclic chromatic number $k$ \cite{dinski-zhu99}. In 2000, Zhu proved that $\chi_g(G)\leq 3k+2$ in partial $k$-trees \cite{zhu00}. For planar graphs, Zhu \cite{zhu08} proved in 2008 that $\chi_g(G)\leq 17$, Sekiguchi \cite{sekiguchi14} proved in 2014 that $\chi_g(G)\leq 13$ if the girth is at least 4 and Nakprasit et al. \cite{nakprasit18} proved in 2018 that $\chi_g(G)\leq 5$ if the girth is at least 7. In 2008, Bohman, Frieze and Sudakov \cite{bohman08} investigated the asymptotic behavior of $\chi_g(G_{n,p})$ for the random graph $G_{n,p}$.

In Bodlaender's 1991 paper, the complexity was left as ``\textit{an interesting open problem}''. A point of difficulty to set the complexity is that it is not clear what its natural decision problem is. As pointed out by Zhu \cite{zhu99}, the graph coloring game ``\emph{exhibits some strange properties}'' and the following naive question is still open (Question 1 of \cite{zhu99}): Does Alice have a winning strategy for the coloring game with $k+1$ colors if she has a winning strategy with $k$ colors?
Thus it is possible to define two decision problems for the graph coloring game: given a graph $G$ and an integer $k$: $\chi_g(G)\leq k$\ ? Does Alice have a winning strategy with $k$ colors? Both problems are equivalent if and only if Question 1 of \cite{zhu99} is true.

Nevertheless, it was proved in 2019 that both coloring game decision problems are PSPACE-complete \cite{rudini19}, solving Bodlaender's 1991 question. Also in 2019, Andres and Lock \cite{andres-lock19} proposed five variants of the graph coloring game: $g_B$ (Bob starts the game), $g_{A,A}$ (Alice starts and can pass turns), $g_{A,B}$ (Alice starts and Bob can pass turns), $g_{B,A}$ (Bob starts and Alice can pass turns) and $g_{B,B}$ (Bob starts and can pass turns). They left the following problem: ``\textit{the question of PSPACE-hardness remains open for all the game variants mentioned above}''.

In 2019, Charpentier, Hocquard, Sopena and Zhu \cite{zhu19} proposed a connected version of the graph coloring game (starting with Alice): the subgraph induced by the set of colored vertices must be connected. They prove that Alice wins with 2 colors in bipartite graphs and with 5 colors in outerplanar graphs.

In 2013, Havet and Zhu \cite{havet13} proposed the greedy coloring game and the game Grundy number $\Gamma_g(G)$. They proved that $\Gamma_g(G)\leq 3$ in forests and $\Gamma_g(G)\leq 7$ in partial 2-trees. They also posed two questions. Problem 5 of \cite{havet13}: $\chi_g(G)$ can be bounded by a function of $\Gamma_g(G)$? Problem 6 of \cite{havet13}: Is it true that $\Gamma_g(G)\leq\chi_g(G)$ for every graph $G$?
In 2015, Krawczyk and Walczak \cite{poset15} answered Problem 5 of \cite{havet13} in the negative: $\chi_g(G)$ is not upper bounded by a function of $\Gamma_g(G)$. To the best of our knowledge, Problem 6 of \cite{havet13} is still open.
In 2019, it was proved that the greedy coloring game is PSPACE-complete \cite{rudini19}.
It was also proved that the game Grundy number is equal to the chromatic number in split graphs and extended $P_4$-laden graphs, even if Bob starts and can pass any turn.

In this paper, we prove that all variants of the graph coloring game and the greedy coloring game are PSPACE-complete even if the number of colors is the chromatic number for any pair $Y\in\{Alice, Bob\}$ and $Z\in\{Alice, Bob, No\ one\}$, where $Y$ starts the game and $Z$ can pass turns, by reductions from POS-CNF-11 and POS-DNF-11. Finally, we also prove that the connected version of the graph coloring game is PSPACE-complete, by a reduction from the variant of the graph coloring game in which Bob starts the game.

\section{PSPACE-complete variants of graph coloring game}

Firstly, let us consider the game variant $g_B$: Bob starts the game.
Let $\chi_g^B(G)$ be the minimum number of colors in the set $C$ for which Alice has a winning strategy in $g_B$.
Clearly, $\chi_g^B(G)\geq\chi(G)$. With this, we can define two decision problems for $g_B$: given a graph $G$ and an integer $k$,
\begin{itemize}
\item (Problem $g_B$-1) $\chi_g^B(G)\leq k$\ ?
\item (Problem $g_B$-2) Does Alice have a winning strategy in $g_B$ with $k$ colors?
\end{itemize}

In this section, we prove that the following more restricted problem is PSPACE-complete: given a graph $G$ and its chromatic number $\chi(G)$,
\begin{itemize}
\item (Problem $g_B$-3) $\chi_g^B(G)=\chi(G)$\ ?
\end{itemize}
Notice that $\chi(G)$ is part of the input of Problem $g_B$-3.

It is easy to see that Problems $g_B$-1 and $g_B$-2 are generalizations of Problem $g_B$-3, since both problems are equivalent to it for $k=\chi(G)$. For this, notice that $\chi_g^B(G)\leq k=\chi(G)$ if and only if $\chi_g^B(G)=\chi(G)$, which is true if and only if Alice has a winning strategy in $g_B$ with $k=\chi(G)$ colors. Then the PSPACE-hardness of Problem $g_B$-3 implies the PSPACE-hardness of Problems $g_B$-1 and $g_B$-2.
To the best of our knowledge, no paper have explicitly defined these decision problems or proved pertinence in PSPACE. 

\begin{lemma}\label{lem.pspace}
Problems $g_B$-1, $g_B$-2 and $g_B$-3 are in PSPACE.
\end{lemma}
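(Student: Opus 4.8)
The plan is to treat $g_B$ with a fixed number $k$ of colors as a finite, perfect-information, two-player game whose game tree has polynomial depth, and to evaluate this tree by a recursive alternating search that reuses workspace, in the spirit of the standard proof that alternating polynomial time equals $\mathrm{PSPACE}$.

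First I would make precise the notion of a \emph{position}: a proper partial coloring of $G$ with colors from $\{1,\dots,k\}$ together with the name of the player to move. A position occupies $O(n\log k)$ bits, where $n:=|V(G)|$, and checking that it is valid (properness) and enumerating the at most $nk$ legal moves (pick an uncolored vertex $v$ and a color not used on any neighbor of $v$) are both polynomial-time tasks. The crucial point is that every move colors a vertex that was previously uncolored, so any play of $g_B$ lasts at most $n$ moves; thus the game tree rooted at the empty coloring (with Bob to move) has depth at most $n$. I would then define $\mathrm{AliceWins}(\cdot)$ on positions by the usual recursion: a terminal position (all vertices colored) evaluates to \textbf{true}; a non-terminal position in which the player to move has no legal move evaluates to \textbf{false} (Bob wins, since some uncolored vertex can no longer receive a color); at a non-terminal position with Alice to move the value is the disjunction of $\mathrm{AliceWins}$ over all legal moves, and with Bob to move it is the conjunction. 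By definition, Alice has a winning strategy in $g_B$ with $k$ colors if and only if $\mathrm{AliceWins}$ of the initial position is \textbf{true}. Evaluating this recursively uses a call stack of depth at most $n$, each frame holding one position and a counter ranging over the at most $nk$ candidate moves, for a total of $O(n(n\log k+\log(nk)))$ bits, which is polynomial in the input size. Hence Problem $g_B$-2 is in $\mathrm{PSPACE}$.

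For Problem $g_B$-3, observe that $\chi_g^B(G)\ge\chi(G)$ always and that $\chi(G)$ is supplied in the input; therefore $\chi_g^B(G)=\chi(G)$ holds if and only if Alice wins $g_B$ with exactly $\chi(G)$ colors, which is a single call to the procedure above, so $g_B$-3 is in $\mathrm{PSPACE}$. For Problem $g_B$-1 the only extra care is that the property ``Alice wins with $k$ colors'' need not be monotone in $k$ (this is Question~1 of \cite{zhu99}), so a single call does not suffice; instead I would note that with $n=|V(G)|$ colors Alice trivially wins (every vertex has at most $n-1$ neighbors, hence always an available color), so $\chi_g^B(G)\le n$, and then decide $\chi_g^B(G)\le k$ by running the procedure for every $k'\in\{1,\dots,\min(k,n)\}$ and accepting iff some run accepts. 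This is polynomially many invocations of a $\mathrm{PSPACE}$ subroutine on the same instance, reusing the same polynomial workspace, so $g_B$-1 is in $\mathrm{PSPACE}$ as well. I do not anticipate a genuine obstacle: the two points that must be handled correctly are bounding the game length by $n$ (so the recursion stays shallow) and dealing with the possible non-monotonicity in $g_B$-1 by a polynomial sweep over the number of colors rather than a lone call; everything else is the textbook polynomial-space evaluation of an AND/OR game tree.
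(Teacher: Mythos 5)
Your proposal is correct and follows essentially the same route as the paper: bound the game length by $n$, observe the game tree is polynomially bounded, and conclude membership in PSPACE (the paper cites the general result of Hearn and Demaine where you unfold the explicit depth-$n$ AND/OR recursion, which is the same argument). Your handling of $g_B$-3 by a single call and of $g_B$-1 by a sweep over the number of colors to sidestep the possible non-monotonicity is exactly what the paper does.
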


\begin{proof}[Sketch]
Let $G$ be a graph with $n$ vertices and $k\leq n$ be an integer. 
Let us begin with Problem $g_B$-2. Since the number of turns is exactly $n$ and, in each turn, the number of possible moves is at most $n\cdot k$ (there are at most $n$ vertices to select and at most $k$ colors to use), we have that Problem $g_B$-2 is a polynomially bounded two player game and then it is in PSPACE \cite{demaine09}.
Consequently, Problem $g_B$-3 is also in PSPACE.

Finally, regarding Problem $g_B$-1, notice that it can be decided using Problem $g_B$-2 for all $k'=\chi(G),\ldots,k$. That is, if there is $k'\in\{\chi(G),\ldots,k\}$ such that Problem $g_B$-2 with $k'$ colors is YES, then Problem $g_B$-1 is also YES. Otherwise, it is NO. Since Problem $g_B$-2 is in PSPACE, then Problem $g_B$-1 is also in PSPACE.
\end{proof}

Now, we prove that Problem $g_B$-3 is PSPACE-complete.
In \cite{rudini19}, the PSPACE-hardness reduction of the graph coloring game used the POS-CNF problem, which is known to be log-complete in PSPACE \cite{poscnf78}. In POS-CNF, we are given a set $\{X_1,\ldots,X_N\}$ of $N$ variables and a CNF formula (conjunctive normal form: conjunction of disjunctions) with $M$ clauses $C_1,\ldots,C_M$ (also called disjunctions), in which only positive variables appear (that is, no negations of variables). Players I and II alternate turns setting a previously unset variable True or False, starting with Player I. After all $N$ variables are set, Player I wins if and only if the formula is True. Clearly, since there are only positive variables, we can assume that Players I and II always set variables True and False, respectively.

Unfortunately, by associating Player I with Alice and Player II with Bob, all our attempts to obtain a reduction for $g_B$ similar to the one in \cite{rudini19} using POS-CNF have failed. 
However, another problem proved to be usefull for $g_B$: POS-DNF, which is also PSPACE-complete \cite{poscnf78}. In POS-DNF, we are given a DNF formula (disjunctive normal form: disjuntion of conjunctions) instead of a CNF formula.
In other words, Player I in POS-DNF has a similar role of Player II in POS-CNF: he wins if plays every variable of some conjunction. From now on, we will call Players I and II of POS-DNF as Bob and Alice, respectively.
As an example, consider the DNF formula $(X_1\wedge X_2\wedge X_5)\vee(X_1\wedge X_3\wedge X_5)\vee(X_2\wedge X_4\wedge X_5)\vee(X_3\wedge X_4\wedge X_5)$. Note that Bob has a winning strategy for this formula firstly setting $X_5$ True, since it is in all conjunctions: if Alice sets $X_1$ False, Bob sets $X_4$ True; if Alice sets $X_4$ False, Bob sets $X_1$ True; if Alice sets $X_2$ False, Bob sets $X_3$ True; if Alice sets $X_3$ False, Bob sets $X_2$ True.

\begin{lemma}\label{prop1}
If Bob (resp. Alice) has a winning strategy in POS-CNF or POS-DNF, then he (resp. she) also has a winning strategy if Alice (resp. Bob) can pass any turn.
\end{lemma}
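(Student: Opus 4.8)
The plan is to exploit that, in both POS-CNF and POS-DNF, the Boolean formula $f$ defining the winning condition is \emph{monotone non-decreasing} in the set of variables currently set to True: enlarging the True-set can only satisfy more clauses of a positive CNF, and only satisfy more terms of a positive DNF. Hence all four games in the statement have the same abstract shape: one player --- the \emph{True-player} --- always assigns True and wants $f=1$ at the end, the other --- the \emph{False-player} --- always assigns False and wants $f=0$, and the outcome is $f$ evaluated on the final True-set. In POS-CNF the first player (Alice, here) is the True-player; in POS-DNF the first player (Bob, here) is the True-player. So Lemma~\ref{prop1} breaks into four symmetric cases, according to whether the winner is the True- or the False-player and, correspondingly, whether the opponent that gains the right to pass is the False- or the True-player. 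I would write out one case in detail and obtain the others by exchanging the words ``True'' and ``False'' (and reversing the monotonicity inequality used at the end).

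Consider the winner being the False-player (say Bob, in POS-CNF), with a winning strategy $\sigma$ in the ordinary no-pass game, while the True-player (Alice) may now pass on any turn. I would have Bob run, in parallel with the real game, a \emph{shadow} copy of the ordinary game in which he follows $\sigma$, keeping it slightly ``ahead'' of the real game as follows. When it is Alice's turn in the real game: if she assigns True to a variable still unassigned in the shadow, Bob copies that assignment into the shadow; if she passes, or assigns a variable already True in the shadow, Bob instead makes a \emph{phantom} move in the shadow --- he pretends Alice assigned True to some still-unassigned shadow variable of his choosing. In either subcase the shadow now has exactly one more True variable and it is the False-player's turn there; Bob consults $\sigma$, which names a variable to set False, and he plays that assignment simultaneously in the shadow and in the real game.

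To see this is legal and winning I would track the invariant that, at each round boundary, the shadow's True-set contains the real one and the two False-sets coincide. Consequently every variable unassigned in the shadow is unassigned in the real game, so the reply dictated by $\sigma$ is always a legal move there; and the shadow has no more unassigned variables than the real game, so it terminates no later. The moves fed into the shadow form an ordinary play of the no-pass game (True-player moves, then False-player following $\sigma$, alternating), so when the shadow ends $\sigma$ guarantees $f=0$ on its final True-set. That True-set is the complement of the shadow's False-set, hence of the real False-set, so every variable still unassigned in the real game is already True in the shadow; therefore the real game's eventual True-set is contained in the shadow's, and monotonicity of $f$ forces $f=0$ at the end of the real game as well --- Bob wins. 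In the two cases where the \emph{True-player} is the winner and the False-player opponent may pass, a pass is mimicked by a phantom False assignment, the shadow's False-set becomes the larger one, and the same comparison instead gives $f=1$ at the end of the real game; the remaining two cases are identical after swapping the two formulas.

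The one genuine subtlety, and the step I expect to cost the most care, is the desynchronization produced by a pass: since the winner himself is not allowed to pass, he cannot simply ``wait'' in the shadow when the opponent passes, so he must feed the shadow a phantom opponent move, which makes the shadow outrun the real game. One then has to check that phantom variables the opponent may later play for real are harmlessly re-absorbed (they are handled exactly like a pass), that the shadow never exhausts its variables before the real game does (and if it does end first, the winner can play arbitrarily since the value of $f$ is already pinned), and --- most importantly --- that the monotone comparison between the two final True-sets points in the direction favorable to the winner, which is precisely what turns ``winning the shadow'' into ``winning the real game.''
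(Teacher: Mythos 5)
Your proposal is correct and follows essentially the same route as the paper's proof: a pass (or a move on an already-phantomed variable) is absorbed by pretending the opponent played some still-unused variable, the winner keeps following the original no-pass winning strategy, and monotonicity of the positive formula turns the win in the simulated game into a win in the real game. Your shadow-game bookkeeping (containment of True-sets, equality of False-sets, and the endgame comparison) is just a more explicit write-up of the paper's ``assumed variable'' argument.
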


\begin{proof}[Sketch]
In both cases, if the opponent passed a turn, just assume that the opponent has selected some non-selected variable and keep playing following the winning strategy in the original game. If the opponent selects this assumed variable later in the game, just assume that other non-selected variable was selected and keep playing with the winning strategy. If all variables (including the assumed ones) were selected, then the formula is true and just select any assumed variable (in any order). With this, since the player have followed a winning strategy in POS-CNF or POS-DNF and the assumptions restricted only the player (and not the opponent), we are done.
\end{proof}

If the disjunctions/conjunctions have at most 11 variables, we are in POS-CNF-11 and POS-DNF-11 problems, which are also PSPACE-complete \cite{poscnf78}.

One important ingredient of the reduction is the graph $F_1$ of Figure \ref{fig:F1}, which has a clique $K$, an independent set $Q$ of $|K|+3$ vertices and three vertices $s$, $w$ and $y$ such that $s$ and $w$ are neighbors and are adjacent to all vertices in $K$ and $y$ is adjacent to all vertices in $K\cup Q$. We start proving that, in case of Bob firstly coloring $s$, Alice must color $y$ with the same color of $s$ in her first move.

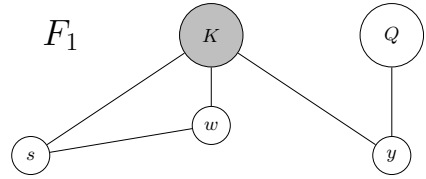
\begin{figure}[h]\centering\scalebox{0.8}{
	\begin{tikzpicture}
	\tikzstyle{vertex}=[draw,circle,fill=white!25,minimum size=18pt,inner sep=1pt]
	\tikzstyle{svertex}=[draw,circle,fill=white!25,minimum size=10pt,inner sep=1pt]
	\tikzstyle{clique}=[draw,circle,fill=black!25,minimum size=30pt,inner sep=1pt]
	\tikzstyle{stable}=[draw,circle,fill=black!00,minimum size=30pt,inner sep=1pt]

	\node(s) at (-2.5,2) {\LARGE{$F_1$}};
	\node[vertex](s) at (-3,0) {$s$};
	\node[vertex](w) at (0,0.5) {$w$};
	\node[vertex](y) at (3,0) {$y$};
	\node[clique](K) at (0,2) {$K$};
	\node[stable](Q) at (3,2) {$Q$};
	
	\draw (s) -- (w) -- (K);
	\draw (s) -- (K);
	\draw (y) -- (K);
	\draw (y) -- (Q);
	
	\end{tikzpicture}}
	\caption{\label{fig:F1}Graph $F_1$: clique $K$ with $k$ vertices and independent set $Q$ with $k+3$ vertices.}
\end{figure}

\begin{lemma}\label{lemF1}
Consider the graph $F_1$ of Figure \ref{fig:F1} with $|K|=k$ and $|Q|=k+3$ and assume that Bob colored vertex $s$ in the first move of $g_B$. Alice wins the game in $F_1$ with $k+2$ colors if and only if she colors vertex $y$ with the same color of $s$ in her first move.
\end{lemma}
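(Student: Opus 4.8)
The plan is to treat the two directions of the equivalence separately. Throughout, write $c_s$ for the color Bob put on $s$, and record three preliminary observations about $F_1$: (1) $K\cup\{s,w\}$ is a clique of size $k+2$, so $\chi(F_1)=k+2$ and, in \emph{every} proper $(k+2)$-coloring, this clique is rainbow; hence $K$ uses exactly $k$ colors and the two colors it omits are $c_s$ and the color of $w$, which forces $y$ (adjacent to all of $K$) to be colored either $c_s$ or the color of $w$; (2) since every vertex of $K$ is adjacent to each of $s,w,y$ and $K$ is a clique of size $k$, as soon as $s,w,y$ hold three pairwise distinct colors the clique $K$ can no longer be completed; (3) more generally, if at some point the current partial proper coloring has no extension to a proper $(k+2)$-coloring of $F_1$, then Bob wins (the game will get stuck).

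\emph{Sufficiency.} Suppose Alice's first move colors $y$ with $c_s$. I claim that from then on no vertex can ever become uncolorable, so the game ends with all $2k+6$ vertices colored and Alice wins; in particular Alice needs no further planning. Indeed, at any later moment an uncolored vertex $v$ has at most $k+1$ distinct colors on its colored neighbors: if $v\in K$, the $\le k-1$ colored vertices of $K\setminus\{v\}$ contribute at most $k-1$ colors while $s,y$ contribute only the single color $c_s$; if $v=w$, the colored vertices of $K$ contribute at most $k$ colors and $s$ contributes $c_s$; if $v\in Q$, only $y$ is a neighbor. In all cases $v$ has a free color.

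\emph{Necessity.} Suppose Alice's first move is not ``$y\leftarrow c_s$''. If she colored $y$ with some $c_y\ne c_s$, Bob colors $w$ (whose only colored neighbor is $s$) with any color outside $\{c_s,c_y\}$, which exists since $k+2\ge 3$; now $s,w,y$ hold three distinct colors and Bob wins by (2)--(3). Otherwise Alice colored a vertex $a\ne y$, so $y$ is still uncolored; Bob replies by coloring a vertex of $Q$ with $c_s$ (legal, since its only neighbor $y$ is uncolored), barring $c_s$ from $y$. By (1), any completion must now color $y$ with the color of $w$, so Bob wins unless $y$ and $w$ eventually receive the same color, and Bob prevents this reactively. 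On a turn where no reaction is triggered Bob colors a still-uncolored vertex of $K$ (never the last one) if one remains, and otherwise a fresh vertex of $Q$; he reacts the first time one of the following occurs: (i) Alice colors $w$ while $y$ is uncolored --- Bob colors a fresh $Q$-vertex with $w$'s color, so both colors permitted for $y$ by (1) are now barred; (ii) Alice colors $y$ while $w$ is uncolored --- then fewer than $k$ vertices of $K$ are colored, so Bob can color $w$ with a color different from $c_s$, from $y$'s color, and from the colors used on $K$, making $s,w,y$ three distinct colors; (iii) all of $K$ gets colored while $y$ and $w$ are uncolored --- then the unique non-$c_s$ color $c^{*}$ missing from $K$ is determined, and Bob colors a fresh $Q$-vertex with $c^{*}$, so $y$'s neighbors now exhibit all $k+2$ colors. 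Since Alice must eventually color $w$, $y$ and all of $K$, one of (i)--(iii) happens, and in each case the resulting partial coloring has no completion; by (3), Bob wins.

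\emph{Where the difficulty lies.} The delicate part is the reactive bookkeeping in the necessity direction: one must verify that whenever Bob needs to react he still has an uncolored vertex of $Q$ at hand, and this is precisely what $|Q|=k+3$ secures --- blocking $y$ costs Bob only two vertices of $Q$ (one colored $c_s$, one colored with $w$'s color or with $c^{*}$), Alice can waste at most the other $k+1$, and $2+(k+1)=k+3$. One must also check that branch (ii) genuinely triggers with fewer than $k$ vertices of $K$ colored (otherwise $y$, and then also $w$, would already be forced to the single remaining color and Alice would survive), and that Alice has no ordering escaping all three branches: she cannot color $y$ ``safely'' early because $c_s$ is already barred while the color of $w$ is not yet pinned down, and she cannot stall forever. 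The remaining ingredients --- the sufficiency argument and the ``three distinct colors on $s,w,y$ block $K$'' count --- are routine.
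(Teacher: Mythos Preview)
Your sufficiency direction is fine (though when counting colors on a vertex $v\in K$ you forgot to list $w$ among its neighbors; the bound $\le k+1$ still holds once $w$ is included). The necessity direction, however, has a substantive gap.

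The problem is your choice of non-reactive move for Bob: playing in $K$ first and only falling back to $Q$ once $K$ is down to one vertex. This lets Alice exhaust $Q$ before any trigger fires, and then your reaction (iii) is unavailable. Concretely, suppose Alice's first move $a$ is a $Q$-vertex, and thereafter Alice always colors a $Q$-vertex, each time using a color $y$ already sees (legal since $y$ is uncolored). Following your strategy, Bob colors the $c_s$-vertex in $Q$, then $k-1$ vertices of $K$, then---once $K$ is down to one---one further $Q$-vertex. At this point all of $Q$ is colored, exactly one $K$-vertex remains, and it is Alice's turn. Alice now colors that last $K$-vertex, choosing its color to match whatever fresh color Bob may have introduced on his single non-reactive $Q$-move; this fixes $c^{*}$ as a color appearing nowhere in $Q$. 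Trigger (iii) fires, but Bob has no $Q$-vertex left, and both remaining vertices $w$ and $y$ can legally receive $c^{*}$, so Alice completes the coloring. Your bookkeeping claim that ``Alice can waste at most the other $k+1$'' overlooks Bob's own non-reactive $Q$-move, which is exactly the extra vertex you are short.

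There is a second, smaller gap: if Alice's first move is $a=w$, your prescription has Bob color a $Q$-vertex with $c_s$, after which Alice simply colors $y$ with $w$'s color on her second move and wins. (Here Bob should instead color $y$ immediately with a third color.)

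The paper's strategy is essentially the reverse of yours: Bob's default move is to color a $Q$-vertex with a color \emph{not} yet seen by $y$. This forces Alice, on her next move, to put that color into $K$ (otherwise Bob plays it on $w$, and then $y$ is blocked from both $c_s$ and $w$'s color), and since $|Q|=|K|+3$ Alice cannot keep matching forever. The point is that Bob must actively spend $Q$-vertices to generate threats that Alice is obliged to answer in $K$, not passively fill $K$ himself.
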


\begin{proof}[sketch]
Without loss of generality, assume $s$ was colored with color 1.
Note that $F_1$ can be colored with $k+2$ colors if and only if either $y$ and $s$ or $y$ and $w$ receive the same color. Then, she wins with $k+2$ colors if colors $y$ with 1 in her first move. Thus assume Alice does not color $y$ with 1 in her first move.

During the game, we say that a vertex $v$ sees a color $c$ if $v$ has a neighbor colored $c$. Bob can win by coloring $s, w$ and $y$ with distinct colors in the following way. He firstly colors a vertex of $Q$ with color 1, avoiding Alice to color $y$ with color 1. Now, Bob has to guarantee that $y$ and $w$ receive different colors. For this, the following strategy holds:
(i) if $w$ is not colored and some color $c\ne 1$ appears in $Q$ and does not appear in $K$, then he colors $w$ with color $c$;
(ii) If $w$ is colored, $y$ is not colored and does not see some color $c$ distinct from the color of $w$, then he colors $y$ with $c$;
(iii) otherwise, he colors any vertex of $F_1$ preferring vertices of $Q$ with any color not appearing in the neighborhood of $y$.

This strategy guarantees that every color seen by $w$ is also seen by $y$. Moreover, after a Bob's move from (iii), he guarantees that some color $c$ seen by $y$ is not seen by $w$. Thus Alice must color a vertex of $K$ with $c$, since otherwise Bob wins in his turn from (i) or (ii). Since $|Q|=|K|+3$, Alice cannot do this indefinitely and Bob wins the game.
\end{proof}

\begin{theorem}\label{teo1}
Given a graph $G$, deciding whether $\chi_g^B(G) = \chi(G)$ is PSPACE-complete. Thus, given $k$, deciding whether $\chi_g^B(G) \leq k$ or deciding if Alice has a winning strategy in $g_B$ with $k$ colors are PSPACE-complete problems.
\end{theorem}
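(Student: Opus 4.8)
Membership in PSPACE is Lemma~\ref{lem.pspace}, so it remains to prove PSPACE-hardness of Problem~$g_B$-3; the hardness of Problems~$g_B$-1 and $g_B$-2 then follows from the equivalences observed before Lemma~\ref{lem.pspace} (take $k=\chi(G)$). The plan is a polynomial reduction from POS-DNF-11, under the identification Player~I~$=$~Bob, Player~II~$=$~Alice.

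Given a POS-DNF-11 formula $\phi=C_1\vee\cdots\vee C_M$ over variables $X_1,\dots,X_N$, I would construct a graph $G_\phi$ and an integer $k=k(\phi)$ and show $\chi(G_\phi)=k$ (by exhibiting both a proper $k$-coloring and a $K_k$ subgraph). The graph consists of: (a) a \emph{palette} part, a clique of size $k$ that forces $\chi(G_\phi)=k$ and fixes the meaning of the colors; (b) for each variable $X_i$, a \emph{selection} gadget with one decision vertex whose color encodes the value of $X_i$ (one distinguished color standing for \emph{True}), together with a private copy $F_1^{(i)}$ of the graph $F_1$ of Figure~\ref{fig:F1} attached so that, when Bob colors the $s$-vertex of $F_1^{(i)}$, Lemma~\ref{lemF1} forces Alice to answer at its $y$-vertex; and (c) for each clause $C_j$, a \emph{verification} gadget --- roughly a $(k-1)$-clique together with one extra vertex adjacent both to that clique and to the ``True''-vertices of the variables of $C_j$ --- arranged so that the extra vertex admits a proper color from the $k$ colors \emph{iff} some variable of $C_j$ was set \emph{False}. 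The copies $F_1^{(i)}$ are the tempo-control device: they let Bob use up a turn while compelling Alice's reply, which is what makes the alternation of the coloring game reproduce the alternation of assignments in POS-DNF with Player~I moving first.

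The correctness statement to prove is: \emph{Alice wins $g_B$ on $G_\phi$ with $k$ colors if and only if Player~II wins the POS-DNF game on $\phi$}. For one direction, Alice simulates her winning POS-DNF strategy: she replies to every $F_1^{(i)}$-opening by the move forced in Lemma~\ref{lemF1}, and whenever a variable gadget is touched she colors its decision vertex to encode the value her strategy assigns; since the resulting assignment falsifies every $C_j$, every verification gadget can be completed and Alice colors all of $G_\phi$. For the converse (contrapositive), if Player~I wins POS-DNF, Bob plays symmetrically: he preserves the tempo by opening fresh $F_1^{(i)}$ copies, steers the decision vertices to an assignment satisfying some clause $C_{j_0}$, and then the extra vertex of the $C_{j_0}$-gadget sees all $k$ colors and is uncolorable, so Bob wins.

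The main obstacle is the deviation analysis that makes these simulations faithful. One has to verify that: (i) it is always Bob, and never Alice, who can afford a ``passing'' move in the abstract game --- Bob always has an unopened $F_1^{(i)}$ and is never himself forced into a losing move, while Alice is never forced to deviate; (ii) every out-of-protocol move (playing in the wrong gadget, coloring a decision or verification vertex too early, using an unexpected color, or ignoring a standing $F_1$ threat) is immediately punishable, with Lemma~\ref{lemF1} handling deviations internal to a copy of $F_1$ and the palette and verification gadgets designed so that a premature or wrongly colored move is either harmless or loses on the spot; and (iii) each verification gadget turns into a forced win for the appropriate player exactly once all $N$ decision vertices have been colored, not before. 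Establishing $\chi(G_\phi)=k$ on the nose while keeping all of these gadget interactions consistent is where the real work lies; the remaining bookkeeping parallels the POS-CNF reduction of \cite{rudini19}.
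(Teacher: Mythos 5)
Your high-level plan matches the paper's: reduce from POS-DNF-11 with Player~I identified with Bob, use $F_1$ together with Lemma~\ref{lemF1} to script the opening, and encode the formula with clause-checking cliques of size $\chi(G)$. However, your clause gadget has a genuine flaw. A single ``extra vertex'' adjacent to a $(k-1)$-clique and to the decision vertices of the variables of $C_j$ becomes uncolorable as soon as \emph{one} of those decision vertices carries the distinguished color: it implements a disjunction, whereas a DNF conjunction $C_j$ must kill the gadget only when \emph{all} of its variables are set True. No single vertex adjacent to all the decision vertices can test that conjunction. The paper's construction resolves exactly this: the whole conjunction clique (of size $M+3N+25=\chi(G)$) is the threatened object, \emph{every} vertex of it has exactly one private neighbor outside the clique ($y$ for the twins of $\ell_{j,0}$, $x_i$ for the literal twins, $s$ for $L_j$), and the clique becomes uncolorable precisely when \emph{all} of these outside neighbors receive the common color of $s$ and $y$ --- an AND over the outside neighborhood. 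This also explains the faithfulness of the simulation that your sketch leaves open: Alice ``defusing'' a conjunction clique by inserting color $1$ through a literal vertex is possible only while the corresponding $x_i$ is uncolored, i.e.\ it corresponds exactly to setting $X_i$ False.

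Second, your tempo mechanism is not supported by the lemma you invoke. Lemma~\ref{lemF1} concerns Bob's \emph{first} move of the game on $F_1$ alone; if Bob opens a fresh private copy $F_1^{(i)}$ in mid-game, Alice is not forced to answer immediately (Bob's punishment inside a copy takes several further moves, during which Alice can act elsewhere), and the interaction among many simultaneously ``live'' copies and the rest of the graph is precisely what would need a new proof. The paper uses a \emph{single} copy of $F_1$, only for the opening exchange $s\mapsto y$, and handles all tempo issues through Lemma~\ref{prop1}: any off-protocol move by the opponent is treated as a pass in the logical game. Relatedly, your claim (i) that only Bob ever needs a passing device is not right --- both directions of the equivalence require the winning player to absorb the opponent's stalling moves (this is why Lemma~\ref{prop1} is stated symmetrically and why Alice's strategy in the paper also has a ``treat it as a pass'' branch). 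As written, the reduction would not go through without replacing the verification gadget and the tempo argument along these lines.
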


\begin{proof}[Sketch]
From Lemma \ref{lem.pspace}, the three decision problems are in PSPACE.
Given a POS-DNF-11 formula with $N$ variables $X_1,\ldots,X_N$ and $M$ conjunctions $C_1,\ldots,C_M$, let $p_j$ (for $j=1,\ldots,M$) be the size of conjunction $C_j$ ($p_j\leq 11$). We will construct a graph $G$ such that $\chi(G) = M+3N+25$ and $\chi_g(G) = M+3N+25$ if and only if Alice has a winning strategy for the POS-DNF-11 formula.

Initially, the constructed graph $G$ is the graph $F_1$ of Figure 1 with $|K|=M+3N+23$ and $|Q|=|K|+3$. See Figure \ref{fig1b}.
For every variable $X_i$, create a vertex $x_i$ in $G$. For every conjunction $C_j$, we create a \emph{conjunction clique}. For this, first create a clique with vertices $\ell_{j,1},\ldots,\ell_{j,p_j}$ and join $\ell_{j,k}$ to $x_i$ with an edge if and only if both are associated to the same variable, for $k=1,\ldots,p_j$. Also add the new vertex $\ell_{j,0}$ (which is not associated to variables) and join it with an edge to the vertex $y$. For every vertex $\ell_{j,k}$ ($j=1,\ldots,M$ and $k=0,\ldots,p_j$), replace it by two true-twin vertices $\ell'_{j,k}$ and $\ell''_{j,k}$, which are adjacent vertices with same neighborhood of $\ell_{j,k}$. Moreover, add to the conjunction clique of $C_j$ a clique $L_j$ with size $M+3N+25-2(p_j+1)\geq 3N$ and join all vertices of $L_j$ to $s$. With this, all conjunction cliques have exactly $M+3N+25$ vertices.

Figure \ref{fig1b} shows the constructed graph $G$ for the formula $(X_1\wedge X_2\wedge X_5)\vee(X_1\wedge X_3\wedge X_5)\vee(X_2\wedge X_4\wedge X_5)\vee(X_3\wedge X_4\wedge X_5)$. Recall that Bob has a winning strategy in POS-DNF-11 firstly setting $X_5$ True: if Alice sets $X_1$ False, Bob sets $X_4$ True; if Alice sets $X_4$ False, Bob sets $X_1$ True; if Alice sets $X_2$ False, Bob sets $X_3$ True; if Alice sets $X_3$ False, Bob sets $X_2$ True. In the reduction of this example, we have $N=5$ variables, $M=4$ conjunctions, $p_j=3$, $|K|=42$, $|Q|=45$, the cliques $L_1$ to $L_4$ have $M+3N+25-2(p_j+1)=36$ vertices each.

\begin{figure}[h]\label{fig1b}\centering\scalebox{0.8}{
\begin{tikzpicture}[auto]
\tikzstyle{vertex}=[draw,circle,fill=white!25,minimum size=8pt,inner sep=1pt]
\tikzstyle{vertex1}=[draw,circle,fill=white!25,minimum size=12pt,inner sep=2pt]
\tikzstyle{vertex2}=[draw,circle,fill=black!15,minimum size=12pt,inner sep=3pt]
\tikzstyle{vertex3}=[draw,circle,fill=black!25,minimum size=20pt,inner sep=7pt]
\tikzstyle{vertex4}=[draw,circle,fill=black!00,minimum size=20pt,inner sep=7pt]
\node[vertex2] (s) at (2.5,5)   {$s$};
\node[vertex]  (w) at (4.5,5.5) {$w$};
\node[vertex3] (K) at (4.5,7)   {$K$};
\node[vertex4] (Q) at (6.5,7)   {$Q$};
\node[vertex2] (y) at (6.5,5)   {$y$};
\node () at (2.7,6.5) {\LARGE{$F_1$}};

\node[vertex] (x1) at (0,-1) {$x_1$};
\node[vertex] (x2) at (3,-1) {$x_2$};
\node[vertex] (x3) at (6,-1) {$x_3$};
\node[vertex] (x4) at (9,-1) {$x_4$};
\node[vertex] (x5) at (4.5,-2.2) {$x_5$};

\node[vertex] (L11) at (-0.8,1){$\ell_{1,1}$};
\node[vertex] (L12) at (0.8,1) {$\ell_{1,2}$};
\node[vertex] (L15) at (0,0.3) {$\ell_{1,3}$};
\node[vertex] (L21) at (2.2,1) {$\ell_{2,1}$};
\node[vertex] (L22) at (3.8,1) {$\ell_{2,2}$};
\node[vertex] (L25) at (3,0.3) {$\ell_{2,3}$};
\node[vertex] (L31) at (5.2,1) {$\ell_{3,1}$};
\node[vertex] (L32) at (6.8,1) {$\ell_{3,2}$};
\node[vertex] (L35) at (6,0.3) {$\ell_{3,3}$};
\node[vertex] (L41) at (8.2,1) {$\ell_{4,1}$};
\node[vertex] (L42) at (9.8,1) {$\ell_{4,2}$};
\node[vertex] (L45) at (9,0.3) {$\ell_{4,3}$};

\node[vertex2] (L13) at (-0.5,2){$L_{1}$};
\node[vertex] (L14) at (0.5,2) {$\ell_{1,0}$};
\node[vertex2] (L23) at (2.5,2) {$L_{2}$};
\node[vertex] (L24) at (3.5,2) {$\ell_{2,0}$};
\node[vertex2] (L33) at (5.5,2) {$L_{3}$};
\node[vertex] (L34) at (6.5,2) {$\ell_{3,0}$};
\node[vertex2] (L43) at (8.5,2) {$L_{4}$};
\node[vertex] (L44) at (9.5,2) {$\ell_{4,0}$};

\path[-]
(K) edge (s) edge (y) edge (w) (w) edge (s)
(x1) edge[bend left] (L11) edge (L21)
(x2) edge (L12) edge (L31)
(x3) edge (L22) edge (L41)
(x4) edge (L32) edge[bend right] (L42)
(x5) edge[bend left] (L15) edge (L25) edge (L35) edge[bend right] (L45)
(L11)edge(L12)edge(L13)edge(L14)edge(L15)(L12)edge(L13)edge(L14)edge(L15)(L13)edge(L14)edge(L15)(L14)edge(L15)
(L21)edge(L22)edge(L23)edge(L24)edge(L25)(L22)edge(L23)edge(L24)edge(L25)(L23)edge(L24)edge(L25)(L24)edge(L25)
(L31)edge(L32)edge(L33)edge(L34)edge(L35)(L32)edge(L33)edge(L34)edge(L35)(L33)edge(L34)edge(L35)(L34)edge(L35)
(L41)edge(L42)edge(L43)edge(L44)edge(L45)(L42)edge(L43)edge(L44)edge(L45)(L43)edge(L44)edge(L45)(L44)edge(L45)
(s) edge(L13) edge(L23) edge(L33) edge(L43)
(y) edge(L14) edge(L24) edge(L34) edge(L44) edge(Q);
\end{tikzpicture}}
\caption{Constructed graph $G$ for the formula $(X_1\wedge X_2\wedge X_5)\vee(X_1\wedge X_3\wedge X_5)\vee(X_2\wedge X_4\wedge X_5)\vee(X_3\wedge X_4\wedge X_5)$. Recall that each vertex $\ell_{j,k}$ represents two true-twins $\ell'_{j,k}$ and $\ell''_{j,k}$; $L_1$, $L_2$, $L_3$, $L_4$ are cliques with $36$ vertices; $K$ is a clique with $42$ vertices. Bob has a winning strategy avoiding $44$ colors in the graph coloring game.}
\end{figure}
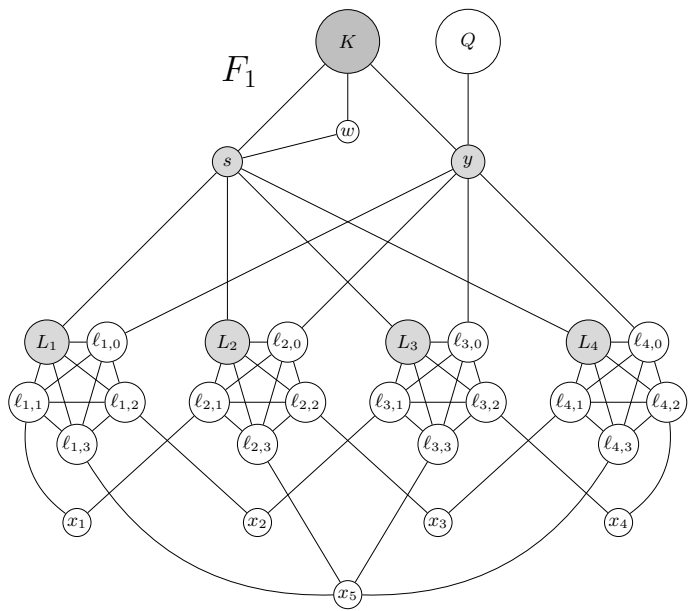

It is easy to check that $\chi(G) = M+3N+25$. For this, color $s$ and all vertices in $Q$ with color 1, the vertices of $K$ with colors $2$ to $M+3N+24$, color $w$, $y$ and every vertex $x_i$ ($i=1,\ldots,n$) with color $M+3N+25$. For every $j=1,\ldots,M$, color the vertices $\ell'_{j,k}$ and $\ell''_{j,k}$ with colors $2k+1$ and $2k+2$ ($k=0,\ldots,p_j$).	Finally, color the vertices of the clique $L_j$ using the colors $2p_j+3,\ldots,M+3N+25$. Since the conjunction cliques contains $M+3N+25$ vertices, then $\chi(G) = M+3N+25$.

In the following, we show that Alice has a winning strategy in the graph coloring game if and only if she has a winning strategy in POS-DNF-11.
From Lemma \ref{lemF1}, in her first move, Alice must color vertex $y$ of $F_1$ if Bob colored vertex $s$ in his first move. 
Roughly speaking, we show that, in the best strategies, Bob colors vertex $s$ first and Alice colors vertex $y$ with the same color. Also notice that every vertex of a conjunction clique has degree exactly $M+3N+25$ (since it has exactly one neighbor outside the clique).
In order to have the conjunction clique colored using the colors $1,\ldots,M+3N+25$, Alice must guarantee that all colors appearing in the outside neighbors of a conjunction clique also appears inside the clique. On the other hand, we show that Bob's strategy is making all outside neighbors of a conjunction clique to be colored with the same color of $s$ and $y$ (which will represent True in POS-DNF-11) and thus impeding Alice of using this color inside the conjunction clique.

We first show that if Bob has a winning strategy in POS-DNF-11, then $\chi_g(G) > M+3N+25$.
Assume that Bob wins in POS-DNF-11. Bob uses the following strategy. He firstly colors $s$ with color $1$ and, from Lemma \ref{lemF1}, Alice must color $y$ with $1$. In the next rounds, Bob follows his first POS-DNF-11 winning strategy: colors with color 1 the vertex associated to the variable that should receive True. If Alice colors a vertex in $N[x_i]$ (the closed neighborhood of $x_i$) for some $i$, Bob considers that she marked $X_i$ False in POS-DNF-11 and he follows his winning POS-DNF-11 strategy; if Alice does not color any vertex in $N[x_i]$ for some $i$, then Bob plays as if Alice has passed her turn in POS-DNF-11 (recall Lemma \ref{prop1}). Then at some point all literals of some conjunction will be marked True. This means that all outside neighbors of some conjunction clique will be colored with color $1$. Since the clique has $M+3N+25$ vertices and color $1$ cannot be used, we have that $\chi_g(G)>M+3N+25$.

We now show that if Alice has a winning strategy in the POS-DNF-11 game then $\chi_g(G)=M+3N+25$. Assume that Alice wins in the POS-DNF-11 game.

Firstly suppose Bob colors $s$ (resp. $y$), say with color 1, in his first move.
Then Alice must color $y$ (resp. $s$) with color 1 in her first move (recall Lemma \ref{lemF1}).
Alice can play using the following strategy: (1) if Bob plays on $x_i$, Alice plays as if Bob has chosen $X_i$ to be True in POS-DNF-11, meaning that she colors the vertex $x_j$ with a color different from $1$, where $X_j$ is the literal chosen by her winning strategy in POS-DNF-11; (2) if Bob plays on some twin obtained from vertex $\ell_{i,j}$, Alice plays the least available color in the other twin; (3) otherwise, Alice plays as if Bob has passed his turn in POS-DNF-11 (recall Lemma \ref{prop1}) if this game is not over yet; otherwise colors any non-colored vertex of $G$ with the least available color.
Following this strategy, every conjunction clique has a vertex colored $1$. Since each clique $L_j$ has at least $3N$ vertices, Alice and Bob can finish coloring every conjunction clique using the colors $1,\ldots,M+3N+25$.

Now assume Bob colored $v_1\not\in\{s,y\}$ in his first move (with some color $c$).
Then Alice colors $y$ firstly with a color $c'\ne c$, say $c'=2$ w.l.g.
Let $v_2$ be the 2nd vertex chosen by Bob. If $v_2=s$ and its color is 2, we are done from the last paragraph (just replacing color 2 by color 1). Otherwise, Alice colors $w$ with color 2 (and then $s$ cannot be colored 2). We show that Alice has a winning strategy in this case. Assume w.l.g. that the color of $s$ in the game will be 1 (otherwise we can relabel the colors). Thus no vertex of $L_j$ is colored 1 ($j=1,\ldots,M$).

With this, if Bob colored $\ell'_{i,0}$ or $\ell''_{i,0}$ for some $i$ and the corresponding conjunction clique does not have a vertex colored 1, then Alice must color a vertex inside this conjunction clique with color 1.
Otherwise, if there is a non-colored variable vertex, Alice colors it with a color distinct from 1.
Since each clique $L_j$ has at least $3N$ vertices, then Bob cannot color all vertices of some $L_j$ before all variable vertices are colored.
With this, Alice can guarantee that all colors of the variable vertices appear in the conjunction cliques and Alice wins.
\end{proof}

Following the same path of $g_B$, we define $\chi_g^{A,A}(G)$, $\chi_g^{A,B}(G)$, $\chi_g^{B,A}(G)$ and $\chi_g^{B,B}(G)$: the minimum number of colors in $C$ s.t Alice has a winning strategy in $g_{A,A}$, $g_{A,B}$, $g_{B,A}$ and $g_{B,B}$, resp.
We can also define three decision problems for each game $g_{Y,Z}$ ($Y,Z\in\{A,B\}$): given a graph $G$, its chromatic number $\chi(G)$ and an integer $k$,
\begin{itemize}
\item (Problem $g_{Y,Z}$-1) $\chi_g^{Y,Z}(G)\leq k$\ ?
\item (Problem $g_{Y,Z}$-2) Does Alice have a winning strategy in $g_{Y,Z}$ with $k$
colors?
\item (Problem $g_{Y,Z}$-3) $\chi_g^{Y,Z}(G)=\chi(G)$\ ?
\end{itemize}

\begin{corollary}\label{corol1}
For every $Y,Z\in\{A,B\}$, the decision problems $g_{Y,Z}$-1, $g_{Y,Z}$-2 and $g_{Y,Z}$-3 are PSPACE-complete.
\end{corollary}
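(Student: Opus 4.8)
The plan is to reuse the reductions already at hand and to neutralise the extra ``pass'' moves by means of Lemma~\ref{prop1}. Membership in PSPACE is obtained exactly as in Lemma~\ref{lem.pspace}: in every variant $g_{Y,Z}$ a pass is simply an additional type of move and, since in each of the four variants exactly one of the two players may pass while the other must colour a vertex whenever it can, the game lasts $O(n)$ moves; hence $g_{Y,Z}$-2 is a polynomially bounded two-player game and lies in PSPACE~\cite{demaine09}, $g_{Y,Z}$-3 is a restriction of it, and $g_{Y,Z}$-1 is solved by running $g_{Y,Z}$-2 for all $k'\in\{\chi(G),\ldots,k\}$. It remains to prove PSPACE-hardness of each $g_{Y,Z}$-3.

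\emph{Variants in which Bob starts ($g_{B,A}$, $g_{B,B}$).} I would reuse \emph{verbatim} the graph $G$ built from a POS-DNF-11 formula in the proof of Theorem~\ref{teo1}, for which $\chi(G)=M+3N+25$, and show that $\chi_g^{B,Z}(G)=\chi(G)$ iff Alice wins the POS-DNF-11 instance. Each implication is the corresponding one of Theorem~\ref{teo1}, with a single addition. In $g_{B,A}$ (only Alice may pass), for the direction ``Bob wins POS-DNF-11 $\Rightarrow\chi_g^{B,A}(G)>\chi(G)$'': whenever Alice passes, Bob treats it, via Lemma~\ref{prop1}, as Alice setting an arbitrary unset variable False, and keeps following his winning strategy; Lemma~\ref{lemF1} is unaffected, since after Bob colours $s$ a pass by Alice only lets him colour a vertex of $Q$ with the colour of $s$, which can only make it harder for her to ever colour $y$ with that colour. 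In $g_{B,B}$ (only Bob may pass), for the direction ``Alice wins POS-DNF-11 $\Rightarrow\chi_g^{B,B}(G)=\chi(G)$'': whenever Bob passes, Alice treats it as a pass of Bob in POS-DNF-11 and follows the strategy of Lemma~\ref{prop1}. In the remaining direction of each variant the player allowed to pass simply never passes, which only restricts that player, so the analysis of Theorem~\ref{teo1} goes through unchanged.

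\emph{Variants in which Alice starts ($g_{A,A}$, $g_{A,B}$).} Here I would run the symmetric reduction, now from a POS-CNF-11 formula (also PSPACE-complete~\cite{poscnf78}), in which Player~I $=$ Alice sets variables True, moves first, and wins iff every clause is satisfied. I would build $G'$ like the $G$ of Theorem~\ref{teo1} but with one \emph{clause clique} per clause in place of the conjunction cliques, dualising the roles of the two players and of the distinguished colour, so that a clause clique becomes uncompletable exactly when all literals of that clause have been set False; then Bob $=$ Player~II wins the coloring game iff some clause ends all-False iff Bob wins POS-CNF-11, i.e.\ iff Alice does not. Since Alice now moves first, one must additionally analyse her opening move: if it damages the copy of $F_1$ (e.g.\ it colours a vertex of $Q$) it is losing by the argument of Lemma~\ref{lemF1}; if it colours $s$, $y$ or $w$, Bob answers so that these vertices carry a common colour, as in $F_1$; if it merely pre-sets a literal False (or is otherwise only to Bob's benefit), Bob exploits it and wins all the more easily; and otherwise it is neutral and Bob treats it as a pass of Player~I in POS-CNF-11 (Lemma~\ref{prop1}). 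In every case Bob can next colour $s$, Lemma~\ref{lemF1} forces Alice's reply on $y$, and from there play proceeds exactly as in Theorem~\ref{teo1}; the remaining pass moves --- Alice's in $g_{A,A}$, Bob's in $g_{A,B}$ --- are absorbed by Lemma~\ref{prop1} as before.

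The $g_{B,\cdot}$ cases are essentially Theorem~\ref{teo1} plus Lemma~\ref{prop1}, so the real work is in the $g_{A,\cdot}$ cases: one has to set up the dual graph $G'$ correctly --- in particular re-prove, for clause cliques, the completability direction that mirrors the corresponding part of Theorem~\ref{teo1} --- and then carry out the opening-move analysis above in full. The delicate point is exactly that analysis: showing that Alice moving first can never be turned into an advantage, so that the game on $G'$ still simulates POS-CNF-11 turn by turn. A secondary, routine point is to verify that in all four variants the pass-absorption of Lemma~\ref{prop1} composes with the coloring-game strategies without the simulated POS-CNF-11 / POS-DNF-11 game ever becoming inconsistent.
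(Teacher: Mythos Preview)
Your treatment of the $g_{B,Z}$ variants and the absorption of pass moves via Lemma~\ref{prop1} is exactly the paper's approach. The divergence is in the $g_{A,Z}$ variants. The paper builds nothing new there: it simply invokes the already-established POS-CNF reduction for the standard game $g_A$ from~\cite{rudini19} (their Theorem~2.2), whose correctness is taken as given, and then overlays Lemma~\ref{prop1} to handle passes. Since~\cite{rudini19} is cited throughout the paper, this is the intended route.

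Your alternative --- manufacturing a $g_A$ reduction by ``dualising'' Theorem~\ref{teo1} --- has a real gap, precisely at the point you flag as delicate. The gadget $F_1$ is asymmetric: it is a threat initiated by whoever colours $s$, forcing the opponent to spend the next move on $y$. In $g_B$ Bob colours $s$ on turn~1, Alice answers on $y$ on turn~2, and Bob moves again on turn~3; the logical game therefore begins with Bob, which is correct for POS-DNF. In $g_A$, if Alice opens elsewhere and Bob colours $s$ on turn~2, Alice is forced to $y$ on turn~3 and Bob moves on turn~4 --- so Bob, not Alice, makes the first logical move, the wrong parity for POS-CNF with Player~I~$=$~Alice. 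Your escape clause ``if [Alice] colours $s$, $y$ or $w$, Bob answers so that these vertices carry a common colour'' is backwards: Bob's goal in $F_1$ is to make $s,w,y$ pairwise distinct, and if Alice commits to $s$ first Bob can attack $Q$ and win inside $F_1$ by the very argument of Lemma~\ref{lemF1}. Conversely, if Alice opens on a literal vertex $x_i$, Bob can choose the colour of $s$ to coincide with Alice's colour on $x_i$, turning her intended ``True'' into a ``False''. So Alice's first move cannot be made to count as her first POS-CNF move, and the simulation breaks. The construction in~\cite{rudini19} is genuinely different from that of Theorem~\ref{teo1} (it was designed for Alice starting), and is not recoverable by a role swap on the same graph.
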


\begin{proof}[Sketch]
Let $Y,Z\in\{A,B\}$. Following similar arguments in Lemma \ref{lem.pspace}, we obtain that they are PSPACE. The crucial argument to prove PSPCE-hardness is Lemma \ref{prop1}, which asserts that if Bob (resp. Alice) has a winning strategy in POS-CNF or POS-DNF-11, then he (resp. she) also has a winning strategy if Alice (resp. Bob) can pass any turn. Following the proof of Theorem 2.2 in \cite{rudini19} if $Y=A$ or the proof of Theorem \ref{teo1} above if $Y=B$, we have that a winning strategy in $g_{Y}$ is obtained from a winning strategy in POS-CNF / POS-DNF-11. If Alice (resp. Bob) has a winning strategy in the related logical game and $Z=B$ (resp. $Z=A$), she (resp. he) also has a winning strategy in $g_{Y,Z}$ by following the winning strategy in the logical game when the opponent passes a turn. Now if Alice (resp. Bob) has a winning strategy in the related logical game and $Z=A$ (resp. $Z=B$), she (resp. he) also has a winning strategy in $g_{Y,Z}$ (just not passing moves and simulating the obtained winning strategy in $g_Y$).

\end{proof}

\section{Connected graph coloring game is PSPACE-complete}

In this section, we prove that the connected version of the graph coloring game \cite{zhu19} is PSPACE-complete with a reduction from Problem $g_B$-3 of Section 2.

\begin{theorem}
Given a graph $G$ and an integer $k$, deciding whether Alice has an winning strategy with exactly $k$ colors or at most $k$ colors in the connected version of the graph coloring game (Alice starting) are PSPACE-complete problems.
\end{theorem}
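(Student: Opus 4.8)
The plan is to reduce from Problem $g_B$-3 of Section 2, where Bob starts the ordinary graph coloring game. Given an instance $(G,\chi(G))$ of that problem, I would build a graph $G'$ together with an integer $k$ so that Alice wins the \emph{connected} coloring game on $G'$ with $k$ colors (Alice starting) if and only if $\chi_g^B(G)=\chi(G)$. The core difficulty is the asymmetry of the moves: in the connected game Alice moves first and every colored set must stay connected, whereas the target game has Bob moving first with no connectivity constraint. So the reduction must force Alice's first (connected) move to be essentially useless and then hand the initiative to Bob in a region that simulates $G$.

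The main gadget I would use is a long induced path (or a sufficiently large ``pendant'' clique/path structure) attached to a designated vertex, call it $r$, of a copy of $G$, chosen so that: (i) Alice's only sensible first move, given connectivity, is to color $r$ (any other choice either disconnects future play from $G$ or lets Bob immediately strand an uncolorable vertex); and (ii) once $r$ is colored, the connected-game constraint effectively propagates coloring along the path toward the bulk of $G$ with Bob getting to move ``first'' on the $G$-part. I would set $k$ so that $G$ itself can only be completed if the induced sub-play on the $G$-copy is an Alice win of the $g_B$ game with $\chi(G)$ colors; the extra path/clique vertices are given enough colors (using new colors not interfering with $G$) that completing them is never the bottleneck. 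A padding clique of size $k$ glued appropriately, as in the proof of Theorem~\ref{teo1}, can be reused to fix $\chi(G')$ and pin down which color plays the role of ``True''.

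The steps, in order, would be: (1) describe $G'$: take $G$, pick $r\in V(G)$, attach the path/clique gadget $P$ at $r$, and attach a global padding clique so that $\chi(G')$ is a prescribed value $k$ that is part of the input; compute $\chi(G')$ exactly by exhibiting a proper $k$-coloring, exactly as in Theorem~\ref{teo1}. (2) Completeness: assuming Alice wins $g_B$ on $G$ with $\chi(G)$ colors, give Alice a connected-game strategy on $G'$: first color $r$, then always respond in the gadget $P$ to keep the colored region connected and ``ready'', while simulating her $g_B$ winning strategy whenever Bob plays inside the $G$-copy and treating any Bob move in $P$ or the padding clique as a passed turn in the $g_B$ game (invoking Lemma~\ref{prop1}); argue that $P$ and the padding clique have enough spare colors that they always get completed. (3) Soundness: assuming Bob wins $g_B$ on $G$, give Bob a connected-game strategy on $G'$: after Alice's forced-ish opening, steer play into the $G$-copy and run his $g_B$ winning strategy, again using Lemma~\ref{prop1} to absorb any Alice moves made in $P$; the connectivity constraint here works \emph{for} Bob since it restricts where Alice can deviate. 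Conclude that $G'$ is a yes-instance iff $(G,\chi(G))$ is, and note membership in PSPACE follows from the same polynomially-bounded two-player-game argument as Lemma~\ref{lem.pspace}.

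The hard part will be step (1)/(2) interface: arranging the gadget $P$ and the choice of $r$ so that Alice's very first move is genuinely constrained to something harmless and so that, from then on, the connectivity requirement neither helps Alice escape a losing $G$-position nor prevents her from completing $P$ when $G$ is winnable. In particular I expect the delicate point to be showing that Bob cannot exploit the connected structure of $P$ to create, before the $G$-part is resolved, an uncolorable vertex in $P$ or at the $r$-attachment — this is where the sizes (path length, clique sizes, number of fresh colors reserved for $P$) must be chosen carefully, analogously to the $|Q|=|K|+3$ slack exploited in Lemma~\ref{lemF1}. Once those size bounds are fixed, the simulation arguments in steps (2) and (3) are routine given Lemma~\ref{prop1} and Theorem~\ref{teo1}.
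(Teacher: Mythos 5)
Your high-level plan (reduce from Problem $g_B$-3, prepend a gadget that neutralizes Alice's first move and hands the initiative to Bob) is the same as the paper's, but the gadget you sketch does not work, for two concrete reasons. First, attaching your path/clique structure at a single vertex $r$ of $G$ means that once play enters $G$ through $r$, both players remain bound by the connectivity constraint \emph{inside} $G$: at every turn only vertices adjacent to the colored component are playable, so the induced game on the $G$-part is the connected coloring game on $G$, not $g_B$. Your reduction would then relate the connected game on $G'$ to the connected game on $G$, not to $\chi_g^B(G)$. The paper avoids this with a vertex $s$ adjacent to \emph{every} vertex of $G$; once $s$ is colored, all of $G$ is permanently reachable and the connectivity constraint becomes vacuous there, so the sub-play on $G$ really is $g_B$. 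Relatedly, having Alice's forced first move be $r\in V(G)$ already corrupts the simulation, since $g_B$ starts from an uncolored $G$ with Bob to move.

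Second, you have no mechanism forcing Bob (rather than Alice) to be the first to play in $G$, and your fallback --- treating a Bob move in the gadget as a ``passed turn'' in $g_B$ via Lemma~\ref{prop1} --- is not available: that lemma is about the logical games POS-CNF/POS-DNF, not about $g_B$ on an arbitrary graph, and it is precisely \emph{not} known (indeed it is the content of the separate variants $g_{B,A}$, $g_{B,B}$) that a $g_B$ winning strategy survives passing. The paper replaces both of these with a parity argument: it takes $|V(G)|$ odd, builds a gadget $F_2$ (clique $K$ of size $\chi(G)$, vertices $y_1,y_2$ joined to $K$, the universal-to-$G$ vertex $s$, plus a pendant $p$ if needed) whose vertex set has odd size, and exploits that $y_1,y_2$ must get equal colors. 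Then $|V(G)\cup\{s\}|$ even forces Alice to open in $\{y_1,y_2\}\cup K$; the odd size of $F_2$ lets the player with the $g_B$ winning strategy always answer a gadget move with a gadget move, so the opponent is always the one who must move first (and respond) in $G$. You would need to rebuild your gadget around a universal attachment vertex and an explicit parity count before steps (2) and (3) of your outline can go through.
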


\begin{proof}[sketch]
As before, we first define a more restricted decision problem and prove that it is PSPACE-complete: given $G$ and its chromatic number $\chi(G)$, Alice has a winning strategy with $\chi(G)$ colors?

We obtain a reduction from Problem $g_{B}$-3 of Section 2. 
Let $(G,\chi(G))$ be an instance of Problem $g_{B}$-3 with $|V(G)|$ odd.
The reduction is as depicted in Figure \ref{fig1z}, where $K$ is a clique with size $\chi(G)$ and $s$ is connected to every vertex of $G$. Notice that $|V(G)\cup\{s\}|$ is even.

\begin{figure}[h]\label{fig1z}\centering\scalebox{0.75}{
	\centering
	\begin{tikzpicture}
	\tikzstyle{vertex}=[draw,circle,fill=white!25,minimum size=18pt,inner sep=1pt]
	\tikzstyle{svertex}=[draw,circle,fill=white!25,minimum size=10pt,inner sep=1pt]
	\tikzstyle{clique}=[draw,circle,fill=black!25,minimum size=30pt,inner sep=1pt]
	
	\node(F) at (-4,-1) {\LARGE{$F_2$}};
	\node[vertex](y1) at (-4,0) {$y_1$};
	\node[vertex](y2) at (0,0) {$y_2$};
	\node[vertex](s) at (-2,-2) {$s$};
	\node[clique](K) at (-2,0) {$K$};
	\node(G) at (1,-2) {\LARGE{$G$}};
	
	\draw (s) -- (y1) -- (K) -- (y2) -- (s) -- (G);
	
	\node(F) at (3,-1) {\LARGE{$F_2$}};
	\node[vertex](ly1) at (3,0) {$y_1$};
	\node[vertex](ly2) at (7,0) {$y_2$};
	\node[vertex](p) at (9,0) {$p$};
	\node[vertex](ls) at (5,-2) {$s$};
	\node[clique](lK) at (5,0) {$K$};
	\node(lG) at (8,-2) {\LARGE{$G$}};

	\draw (ls) -- (ly1) -- (lK) -- (ly2) -- (ls) -- (lG);
	\draw (ly2) -- (p);
	\end{tikzpicture}}
	\caption{\label{fig1z}The reduction from the graph $G$ adding a gadget $F_2$ to it. The left if $\chi(G)$ is even or the right if $\chi(G)$ is odd.}
\end{figure}
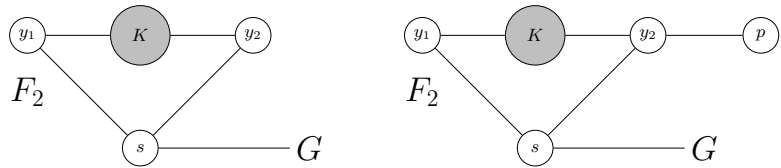

The resulting graph $G'$ has chromatic number $\chi(G')=\chi(G) + 1$.
Note that, if $y_1$ and $y_2$ receive distinct colors, Bob wins since it is impossible to color $G'$ with $\chi(G')$ colors.

Also, note that, if Alice does not play her first move in $y_1, y_2$ or $K$, she loses. This is because, since $|V(G)\cup\{s\}|$ is even, Bob can always guarantee that Alice will play on either $y_1$ or $y_2$ before him and consequently he can play a different color in the other, forcing distinct colors for $y_1$ and $y_2$. This is true even if she plays in $p$ first or in $y_2$ first when $p$ is a vertex of $G'$.

First, assume that Alice has a winning strategy for the variant $g_B$-3 of the graph coloring game (Bob starts the game). She has the following strategy: she begins playing $y_1$. Bob has to play in either $s$ or $K$. She then plays in $y_2$ with the same color as $y_1$, ensuring her safety inside $F_2$. From there on, if Bob plays in $G$, she plays according to her strategy in $g_B$-3. If he plays in $F_2$, she also plays in $F_2$ (since $|V(F)|$ is odd, she can always do this).

Now, assume that Bob has a winning strategy in $g_B$-3. Assume, without loss of generality, that Alice plays either $y_1$ or $K$. If she plays in $y_1$, Bob can play in $s$ and then she has to play in $y_2$. If she plays in $K$, he can play in $y_1$ and then she has to play in $y_2$. From now on, in either case, Bob can guarantee he is the first to make a move in $G$ since $|V(F)|$ is odd. After this, if she plays in $G$, he also plays in $G$ following his winning strategy. If she plays in $F_2$, he also plays in $F_2$ which is always possible.
\end{proof}

\section{PSPACE-complete variants of greedy coloring game}

As in Section 2, we define five variants of the greedy coloring game: $g^*_B$ (Bob starts), $g^*_{A,A}$ (Alice starts and can pass any turn), $g^*_{A,B}$ (Alice starts and Bob can pass any turn), $g^*_{B,A}$ (Bob starts and Alice can pass any turn) and $g^*_{B,B}$ (Bob starts and can pass any turn).
Unlike in the game coloring problem, the greedy game coloring problem satisfies the following:

\begin{proposition}
If Alice (resp. Bob) has a winning strategy with $k$ colors in $g^*_{Y,Z}$, then she (resp. he) also has a winning strategy with $k+1$ colors ($Y,Z\in\{A,B\}$).
\end{proposition}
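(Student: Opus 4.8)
The plan is to derive the statement from a \emph{coupling} between the $(k+1)$-colour game and the $k$-colour game, carried out for the player who holds the winning strategy (I describe it for Alice; Bob's case is symmetric). The starting observation is that in the greedy coloring game a move is merely the choice of an uncoloured vertex, its colour being forced to be the least palette element absent from its coloured neighbourhood. Consequently, with palette $\{1,\ldots,k\}$ an uncoloured vertex $v$ is \emph{dead} (has no legal colour) exactly when all of $1,\ldots,k$ occur on its coloured neighbours; and with palette $\{1,\ldots,k+1\}$ a vertex receives colour $k+1$ precisely when, at the instant it is chosen, all of $1,\ldots,k$ already occur on its coloured neighbours. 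These two events coincide, which is what makes the coupling go through.

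Concretely, suppose Alice has a winning strategy $\sigma$ with $k$ colours in $g^*_{Y,Z}$. When she plays the $(k+1)$-colour game she simultaneously runs, in her head, a simulated $k$-colour game on $G$: on each of her turns she asks $\sigma$ for its move (a vertex, or a pass) and performs the same move in the real game, while on each of Bob's turns she copies his real move into the simulated game. I would then prove, by induction on the number of moves, the invariant that the partial colourings of the two games coincide at all times; in particular the real game never uses colour $k+1$. For the inductive step, assume the colourings agree and some vertex $v$ is about to be chosen. Then $v$ is uncoloured in both games, so the choice is legal in both --- the only way it could be illegal in the simulated $k$-colour game would be for $v$ to be dead there, which cannot happen because $\sigma$ is a winning strategy and (by the same no-dead-vertex reasoning) the moves copied from Bob are legitimate moves of the $k$-colour game. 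Moreover, since both colourings so far use only $\{1,\ldots,k\}$, $v$ has at most $k$ distinctly coloured neighbours, and if it had exactly $k$ it would already be dead in the simulated game; hence it has at most $k-1$ of them and receives the same colour $\leq k$ in both games, which preserves the invariant. It follows that the simulated game is played entirely according to $\sigma$, so it terminates with every vertex coloured; by the invariant the real $(k+1)$-colour game terminates with the very same greedy colouring, so Alice wins it. The coupling never appeals to who moves first nor to who may pass (a pass is copied between the two games verbatim), so it applies uniformly to all five variants $g^*_{Y,Z}$, and the symmetric coupling, run by Bob with his $k$-colour winning strategy, gives the statement for Bob.

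The only delicate point --- and the step I expect to be the main obstacle --- is the ``legality'' half of the invariant: one must argue that the situation ``$v$ is dead in the simulated game'' simply never arises. For this it is essential that the opponent's moves fed into the simulation are genuine legal moves of the $k$-colour game, so that $\sigma$, being a winning strategy against \emph{every} opponent play, indeed never allows a dead vertex to appear; this needs to be checked together with (not before) the colouring-agreement invariant, since the two facts depend on each other. Once this mutual induction is set up carefully, the remainder is routine.
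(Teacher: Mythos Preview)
Your treatment of Alice's direction is correct and is essentially the paper's one-line argument (``a winning $k$-colour strategy is already a winning $(k{+}1)$-colour strategy, since the colouring is greedy and so colour $k{+}1$ is never used'') carried out with proper care; the mutual induction you set up is exactly what is needed to make that sentence rigorous.

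The Bob direction, however, does \emph{not} follow by the ``symmetric coupling'' you invoke, and in fact the Bob clause of the proposition is false as stated. Take $G=K_3$ and $k=2$: Bob trivially wins the greedy game with two colours (a triangle cannot be properly $2$-coloured), yet with $k{+}1=3$ colours Alice wins equally trivially. The place where the symmetry breaks is precisely the ``delicate point'' you isolated for Alice: there you used that $\sigma$ is winning to conclude that no vertex is ever dead in the simulated $k$-game, so every real move of Bob is a legal simulated move and the invariant persists. When Bob tries the analogous simulation with his winning strategy $\tau$, the whole purpose of $\tau$ is to \emph{produce} a dead vertex; the moment it does, Alice in the real $(k{+}1)$-colour game may play that very vertex, it receives colour $k{+}1$, the two partial colourings diverge, and $\tau$ gives Bob no further guidance. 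The paper's own proof says nothing about Bob either; the ``(resp.\ Bob)'' in the proposition appears to be a slip, and only the Alice direction --- which is all that is needed for the monotonicity of $\Gamma_g^{Y,Z}$ --- actually holds.
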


\begin{proof} A winning strategy with $k$ colors in the greedy coloring game is a strategy with $k+1$ colors that does not use the color $k+1$, since the coloring is greedy.
\end{proof}

Let us start with $g^*_B$ (Bob starts the greedy coloring game).
Let $\Gamma_g^B(G)$ be the minimum number of colors in $C$ s.t Alice has a winning strategy in $g^*_B$. Clearly, $\chi(G)\leq\Gamma_g^B(G)\leq\Gamma(G)$.
We can define two natural decision problem for $g^*_B$: given a graph $G$, its chromatic number $\chi(G)$ and an integer $k$,
\begin{itemize}
\item (Problem $g^*_B$-1) $\Gamma^B_g(G)\leq k$? Alice has winning strategy with $k$ colors in $g^*_B$?
\item (Problem $g^*_B$-2) $\Gamma^B_g(G)=\chi(G)$?
\end{itemize}

Clearly, Problem $g^*_B$-1 is a generalization of Problem $g^*_B$-2 (just set $k=\chi(G)$).
Then the PSPACE-hardness of $g^*_B$-2 implies the PSPACE-hardness of $g^*_B$-1.



We obtain a reduction from POS-DNF-11 similar to the one of Section 2 for $g_B$.
One important ingredient of the reduction is the graph $F_3$ of Figure \ref{fig:F3}, which has a clique $K$ with $k$ vertices and three vertices $s$, $w$ and $y$ such that $s$ and $w$ are adjacent to all vertices in $K$ and $w$ is adjacent to $y$. We start proving that, in case of Bob firstly coloring $s$, Alice must color $y$ in her first move.

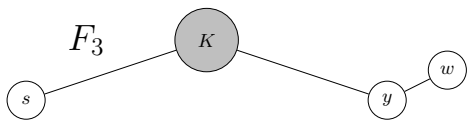
\begin{figure}[h]\centering\scalebox{0.8}{
	\begin{tikzpicture}
	\tikzstyle{vertex}=[draw,circle,fill=white!25,minimum size=18pt,inner sep=1pt]
	\tikzstyle{svertex}=[draw,circle,fill=white!25,minimum size=10pt,inner sep=1pt]
	\tikzstyle{clique}=[draw,circle,fill=black!25,minimum size=30pt,inner sep=1pt]
	\tikzstyle{stable}=[draw,circle,fill=black!00,minimum size=30pt,inner sep=1pt]

	\node(s) at (-2,1) {\LARGE{$F_3$}};
	\node[vertex](s) at (-3,0) {$s$};
	\node[vertex](w) at (4,0.5) {$w$};
	\node[vertex](y) at (3,0) {$y$};
	\node[clique](K) at (0,1) {$K$};
	
	\draw (s) -- (K) -- (y) -- (w);
	\end{tikzpicture}}
	\caption{\label{fig:F3}Graph $F_3$: $K$ is a clique.}
\end{figure}

\begin{lemma}\label{lemF2}
Consider the graph $F_3$ of Figure \ref{fig:F3} with $|K|=k$ and assume that Bob colored vertex $s$ in the first move of $g^*_B$. Alice wins the game in $F_3$ with $k+1$ colors if and only if she colors vertex $y$ in her first move.
\end{lemma}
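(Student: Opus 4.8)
\emph{Plan.} The plan is to isolate a purely structural fact about greedy runs on $F_3$ and then read off both directions of the lemma from it. Note first that, since Bob plays $s$ on move~1 and $s$ then has no colored neighbor, the greedy rule forces $s$ to get color~$1$; assume this. Every vertex of the clique $K$ is adjacent to $s$, so in any run that colors all of $F_3$ the $k$ vertices of $K$ end up with $k$ distinct colors, none equal to~$1$, hence precisely $\{2,\dots,k+1\}$. I would then prove the key claim: \emph{with $k+1$ colors, the greedy run colors every vertex of $F_3$ if and only if $y$ is colored before $w$.}

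Granting the claim, the lemma follows at once. If Alice colors $y$ on her first move, then $w$ is still uncolored (only $s$ has been played), so $y$ precedes $w$ and, by the claim, the run succeeds no matter what Bob (or Alice) does afterwards; hence Alice wins. Conversely, if Alice's first move is not $y$, it is either $w$ itself, so that $w$ already precedes $y$; or a vertex of $K$, in which case Bob answers by coloring $w$ immediately --- legal since $y$ is still uncolored, so $w$ receives color~$1$ --- again making $w$ precede $y$. Either way the claim says the run fails, so Bob wins.

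For the claim, the easy direction is $(\Leftarrow)$: if $y$ is colored while $w$ is uncolored, then every colored neighbor of $y$ lies in $K$ and so has a color $\ge 2$, whence $y$ gets color~$1$; after that $w$ gets color~$2$ (it sees only $y$), and each $v\in K$ sees $s$ and $y$, both colored~$1$, plus at most $k-1$ clique mates, so at most $k$ forbidden colors, and receives a color $\le k+1$. So the run finishes. The harder direction, and the part I expect to take the most care, is $(\Rightarrow)$: suppose $w$ gets colored before $y$. Then $w$ gets color~$1$ (its only neighbor $y$ is uncolored). When $y$ is colored it sees color~$1$ (from $w$) and the set $T\subseteq\{2,\dots,k+1\}$ of colors already used in $K$; if $T=\{2,\dots,k+1\}$ then $y$ needs color~$k+2$ and we are done, and otherwise $y$ takes some $c\in\{2,\dots,k+1\}\setminus T$, after which the $k-|T|\ge 1$ still-uncolored vertices of $K$ are pairwise adjacent and each forbidden every color in $\{1,c\}\cup T$, leaving only $k-|T|-1$ of the colors $\{2,\dots,k+1\}$ available for them --- so by pigeonhole some clique vertex needs color $\ge k+2$. (One also checks that no clique vertex can become uncolorable \emph{before} $y$ is played, since it then sees only $s$ and at most $k-1$ clique mates.) The delicate point throughout is exactly this last count: one must be sure the failure is forced regardless of how either player fills in the remaining moves, which it is precisely because the clique $K$ has no choice but to use all of $\{2,\dots,k+1\}$.
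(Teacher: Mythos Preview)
Your proof is correct and follows essentially the same approach as the paper's sketch: both hinge on the structural observation that the greedy run on $F_3$ succeeds with $k+1$ colors if and only if $y$ receives color~$1$ (equivalently, $y$ is colored before $w$), and then read off Alice's and Bob's strategies from it. The paper states this fact and Bob's reply (color $w$) in two lines without justification; your version supplies the omitted details---the pigeonhole count showing that once $w$ gets color~$1$ the clique $K$ cannot be completed within $\{2,\dots,k+1\}$, and the explicit case split on Alice's first move---so it is the same argument, fully written out.
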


\begin{proof}[sketch]
Clearly $s$ is colored 1 and $F_3$ is colored with $k+1$ colors if and only if $y$ and $s$ receive the same color. If Alice colors $y$ in her first move (color 1), she wins with $k+1$ colors. Thus assume Alice does not color $y$. 
Then Bob colors $w$, which receives color 1, forcing different colors for $s$ and $y$.
\end{proof}

\begin{theorem}\label{teo2}
$g^*_B$-1 and $g^*_B$-2 are PSPACE-complete.
\end{theorem}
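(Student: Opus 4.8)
\textbf{Proof proposal for Theorem \ref{teo2}.}

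The plan is to mimic the structure of the proof of Theorem \ref{teo1}, replacing the gadget $F_1$ by $F_3$ and adjusting the clique sizes so that every conjunction clique becomes, in a greedy coloring, a ``Grundy column'' forcing exactly $M+3N+25$ colors precisely when Alice loses the POS-DNF-11 game. Membership in PSPACE follows exactly as in Lemma \ref{lem.pspace}: the greedy coloring game with $k$ colors is a polynomially bounded two-player game, and Problem $g^*_B$-1 can be decided by running the game for $k'=\chi(G),\ldots,k$, so Problem $g^*_B$-2 is in PSPACE as well. For hardness I would start from a POS-DNF-11 formula with $N$ variables and $M$ conjunctions $C_1,\ldots,C_M$ of sizes $p_j\le 11$, and build a graph $G$ with $\chi(G)=M+3N+25$ and $\Gamma_g^B(G)=\chi(G)$ if and only if Alice wins the POS-DNF-11 game. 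The skeleton is the $F_3$ gadget with $|K|=M+3N+23$; as in Lemma \ref{lemF2}, if Bob first colors $s$ (getting color $1$), Alice is forced to color $y$ in her first move (also getting color $1$), since otherwise Bob colors $w$ and drives $s$ and $y$ apart. Then I attach variable vertices $x_i$ and, for each conjunction $C_j$, a conjunction clique built exactly as in Theorem \ref{teo1}: a clique $\ell_{j,1},\ldots,\ell_{j,p_j}$ joined to the matching $x_i$'s, a vertex $\ell_{j,0}$ joined to $y$, each $\ell_{j,k}$ doubled into true twins $\ell'_{j,k},\ell''_{j,k}$, plus a padding clique $L_j$ of size $M+3N+25-2(p_j+1)\ge 3N$ with all its vertices joined to $s$; this makes every conjunction clique have exactly $M+3N+25$ vertices and every such vertex have degree exactly $M+3N+25$ (one neighbor outside the clique). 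The chromatic number computation is identical to the one in Theorem \ref{teo1}.

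The game analysis is where the greedy constraint must be used in place of the ``some free color remains'' argument of Theorem \ref{teo1}. I would argue that in optimal play Bob colors $s$ first (getting $1$) and Alice answers on $y$ (getting $1$, by Lemma \ref{lemF2}), after which Bob simulates his POS-DNF-11 winning strategy: whenever it is his turn he colors with color $1$ the vertex $x_i$ corresponding to the variable his strategy sets True, interpreting any Alice move in $N[x_i]$ as Alice setting $X_i$ False and any ``useless'' Alice move as a pass (Lemma \ref{prop1}). When all literals of some conjunction $C_j$ have been set True, every outside neighbor of the conjunction clique of $C_j$ — namely the relevant $x_i$'s and $\ell_{j,0}$'s partner via $y$ — carries color $1$; in a greedy coloring the vertices of that clique cannot reuse $1$, so the clique of $M+3N+25$ vertices forces a $(M+3N+25)$-st color beyond $1$, i.e. $\Gamma_g^B(G)>M+3N+25$. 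Conversely, if Alice wins POS-DNF-11, she uses the strategy of Theorem \ref{teo1}: answer $y$ with color $1$ after Bob's $s$ (or symmetrically), translate Bob's $x_i$-moves into ``Bob sets $X_i$ True'' and reply on the $x_j$ her logical strategy prescribes with a color $\ne 1$, reply on the paired twin with the forced greedy color when Bob plays a twin, and otherwise treat Bob's move as a pass; I then must check that this guarantees that in every conjunction clique some vertex gets color $1$ (because some literal of every conjunction is set False by Alice, so some outside neighbor of every conjunction clique avoids $1$ and the greedy rule lets a clique vertex take $1$), and that the padding $L_j$ of size $\ge 3N$ prevents Bob from finishing a clique before the logical game ends — so the whole graph is colored with $M+3N+25$ colors and $\Gamma_g^B(G)=\chi(G)$. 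The remaining cases where Bob opens on $y$, or on some $v_1\notin\{s,y\}$, are handled as in Theorem \ref{teo1}, using $w$ to block one color on $s$ and the $\ge 3N$ padding on each $L_j$.

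The main obstacle, compared with the ordinary coloring game, is that the greedy rule removes Alice's freedom to ``save'' a color: in the ordinary game she only had to ensure every externally used color reappears inside each clique, but in the greedy game the color actually assigned to a vertex is forced to be the least available, so I must verify that Alice's translated strategy never lets a conjunction clique be colored in an order that burns color $1$ uselessly or that pushes some vertex above $M+3N+25$. Concretely, the delicate point is the twin-pair mechanism $\ell'_{j,k},\ell''_{j,k}$: I need that whenever Bob colors one twin, the greedy-forced color on the other twin is exactly the ``next'' color $2k+2$ (or $2k+1$), so the clique $\ell_{j,0},\ldots,\ell_{j,p_j}$ together with $L_j$ still admits a proper greedy completion into $\{1,\ldots,M+3N+25\}$; this is a finite local check on the $F_3$-based conjunction clique, analogous to the interval-coloring in Theorem \ref{teo1}, and I would spell it out there. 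Once that local invariant is in place, the global reduction and the PSPACE-hardness follow by the same bookkeeping as in Theorem \ref{teo1}, and the theorem is proved.
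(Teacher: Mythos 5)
Your proposal has a genuine gap at the heart of the hardness argument: the falsifier's move is illegal under the greedy rule. You write that Alice should ``reply on the $x_j$ her logical strategy prescribes with a color $\ne 1$'', but in the greedy coloring game the color is not chosen by the player --- a variable vertex $x_j$ whose neighbors (the $\ell$-twins of the conjunction cliques) are all still uncolored is forced to receive color $1$. So in your construction the player who needs to set a variable \emph{False} (i.e.\ keep $x_j$ away from color $1$) simply has no legal move that achieves this, and the whole ``Alice wins the logical game $\Rightarrow$ Alice wins the coloring game'' direction collapses. You correctly flag that ``the greedy rule removes Alice's freedom to save a color'', but you only pursue this worry for the twin pairs, not for the variable vertices where it is fatal. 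The paper's fix is exactly a gadget you omit: each $x_i$ gets a pendant neighbor $\overline{x_i}$ of degree $1$; to set $X_i$ False one colors $\overline{x_i}$ (which greedily receives color $1$), thereby forcing $x_i$ to later receive a color $\ge 2$. Correspondingly, the paper reduces from POS-CNF-11 rather than POS-DNF-11, which redistributes the ``easy'' move (color $x_i$, greedily getting $1$) and the ``hard'' falsifying move (color $\overline{x_i}$) between the two players in a way that the parity and the rest of the bookkeeping support.

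Two smaller points. First, your clique size $|K|=M+3N+23$ in $F_3$ is off by one: since in $F_3$ the largest clique through $s$ is $K\cup\{s\}$ (of size $|K|+1$, unlike $F_1$ where $K\cup\{s,w\}$ has size $|K|+2$), taking $|K|=M+3N+23$ lets $F_3$ be colored within $M+3N+25$ colors even when $s$ and $y$ get distinct colors, so Lemma~\ref{lemF2} no longer forces Alice's first reply onto $y$; the paper uses $|K|=M+3N+24$. Second, your worry about the twins is resolved automatically: ``play the least available color on the other twin'' is precisely the greedy move, so that part of the Theorem~\ref{teo1} strategy ports over without change. The membership-in-PSPACE part of your argument is fine.
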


\begin{proof}[Sketch]
Following similar arguments of Lemma \ref{lem.pspace}, since the number of turns is exactly $n$ and, in each turn, the number of possible moves is at most $n$, we have that both decision problems are PSPACE.
We follow a very similar reduction of Theorem \ref{teo1} (but from POS-CNF-11 instead of POS-DNF-11), including a neighbor $\overline{x_i}$ of degree 1 to each vertex $x_i$ and replacing graph $F_1$ by graph $F_3$ with $|K|=M+3N+24$.
Recall that, in POS-CNF, it is given a CNF formula (conjunctive normal form: conjunction of disjunctions). In POS-CNF-11, there is an additional constraint: the clauses have at most 11 variables.
Figure \ref{fig2b} shows the constructed graph $G$ for the formula $(X_1\vee X_2)\wedge(X_1\vee X_3)\wedge(X_2\vee X_4)\wedge(X_3\vee X_4)$.
In the reduction of this example, we have $N=4$ variables and $M=4$ clauses. The cliques $L_1$ to $L_M$ have $M+3N+19$ vertices each.

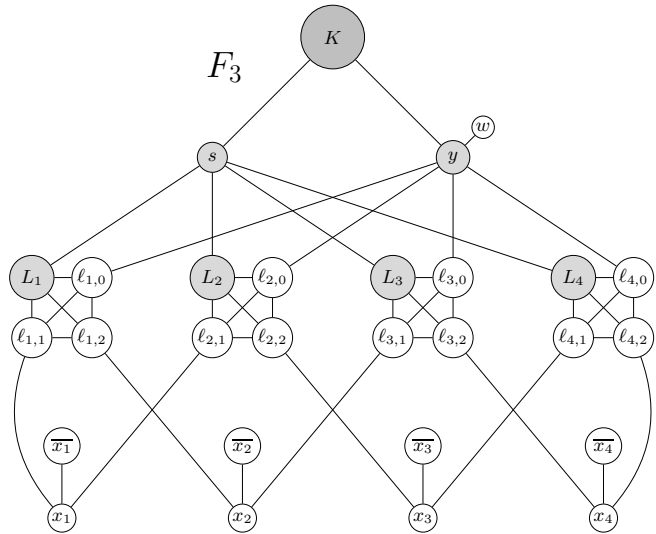
\begin{figure}[h]\label{fig2b}\centering\scalebox{0.8}{
\begin{tikzpicture}[auto]
\tikzstyle{vertex}=[draw,circle,fill=white!25,minimum size=8pt,inner sep=1pt]
\tikzstyle{vertex1}=[draw,circle,fill=white!25,minimum size=12pt,inner sep=2pt]
\tikzstyle{vertex2}=[draw,circle,fill=black!15,minimum size=12pt,inner sep=3pt]
\tikzstyle{vertex3}=[draw,circle,fill=black!25,minimum size=20pt,inner sep=7pt]
\tikzstyle{vertex4}=[draw,circle,fill=black!00,minimum size=20pt,inner sep=7pt]
\node[vertex2] (s) at (2.5,5)   {$s$};
\node[vertex]  (w) at (7,5.5) {$w$};
\node[vertex3] (K) at (4.5,7)   {$K$};
\node[vertex2] (y) at (6.5,5)   {$y$};
\node () at (2.7,6.5) {\LARGE{$F_3$}};

\node[vertex] (x1) at (0,-1) {$x_1$};
\node[vertex] (x2) at (3,-1) {$x_2$};
\node[vertex] (x3) at (6,-1) {$x_3$};
\node[vertex] (x4) at (9,-1) {$x_4$};
\node[vertex1] (x1b) at (0,0.2) {$\overline{x_1}$};
\node[vertex1] (x2b) at (3,0.2) {$\overline{x_2}$};
\node[vertex1] (x3b) at (6,0.2) {$\overline{x_3}$};
\node[vertex1] (x4b) at (9,0.2) {$\overline{x_4}$};

\node[vertex] (L11) at (-0.5,2){$\ell_{1,1}$};
\node[vertex] (L12) at (0.5,2) {$\ell_{1,2}$};
\node[vertex] (L21) at (2.5,2) {$\ell_{2,1}$};
\node[vertex] (L22) at (3.5,2) {$\ell_{2,2}$};
\node[vertex] (L31) at (5.5,2) {$\ell_{3,1}$};
\node[vertex] (L32) at (6.5,2) {$\ell_{3,2}$};
\node[vertex] (L41) at (8.5,2) {$\ell_{4,1}$};
\node[vertex] (L42) at (9.5,2) {$\ell_{4,2}$};

\node[vertex2] (L13) at (-0.5,3){$L_{1}$};
\node[vertex] (L14) at (0.5,3) {$\ell_{1,0}$};
\node[vertex2] (L23) at (2.5,3) {$L_{2}$};
\node[vertex] (L24) at (3.5,3) {$\ell_{2,0}$};
\node[vertex2] (L33) at (5.5,3) {$L_{3}$};
\node[vertex] (L34) at (6.5,3) {$\ell_{3,0}$};
\node[vertex2] (L43) at (8.5,3) {$L_{4}$};
\node[vertex] (L44) at (9.5,3) {$\ell_{4,0}$};

\path[-]
(K) edge (s) edge (y) (y) edge (w)
(x1) edge[bend left] (L11) edge (L21)
(x2) edge (L12) edge (L31)
(x3) edge (L22) edge (L41)
(x4) edge (L32) edge[bend right] (L42)
(L11)edge(L12)edge(L13)edge(L14)(L12)edge(L13)edge(L14)(L13)edge(L14)
(L21)edge(L22)edge(L23)edge(L24)(L22)edge(L23)edge(L24)(L23)edge(L24)
(L31)edge(L32)edge(L33)edge(L34)(L32)edge(L33)edge(L34)(L33)edge(L34)
(L41)edge(L42)edge(L43)edge(L44)(L42)edge(L43)edge(L44)(L43)edge(L44)
(s) edge(L13) edge(L23) edge(L33) edge(L43)
(y) edge(L14) edge(L24) edge(L34) edge(L44)
(x1b) edge (x1) (x2b) edge (x2) (x3b) edge (x3) (x4b) edge (x4);
\end{tikzpicture}}
\caption{Constructed graph $G$ for the formula $(X_1\vee X_2)\wedge(X_1\vee X_3)\wedge(X_2\vee X_4)\wedge(X_3\vee X_4)$. Recall that each vertex $\ell_{j,k}$ represents two true-twins $\ell'_{j,k}$ and $\ell''_{j,k}$, $L_1$, $L_2$, $L_3$, $L_4$ are cliques with $35$ vertices. Bob has a winning strategy avoiding 41 colors in the greedy coloring game.}
\end{figure}

As in Theorem \ref{teo1}, $\chi(G)=M+3N+25$.
With similar arguments in Theorem \ref{teo1}, we obtain the result. The main difference is that, instead coloring a vertex $x_i$ with a color distinct from 1, Alice colors the vertex $\overline{x_i}$ (with color 1).
\end{proof}

As before, we define $\Gamma_g^{A,A}(G)$, $\Gamma_g^{A,B}(G)$, $\Gamma_g^{B,A}(G)$ and $\Gamma_g^{B,B}(G)$: the minimum number of colors in $C$ s.t Alice has a winning strategy in $g^*_{A,A}$, $g^*_{A,B}$, $g^*_{B,A}$ and $g^*_{B,B}$, resp.
With this, we define two decision problem for each game $g^*_{Y,Z}$ with $Y,Z\in\{A,B\}$: given $G$, its chromatic number $\chi(G)$ and an integer $k$,
\begin{itemize}
\item (Problem $g^*_{Y,Z}$-1) $\Gamma_g^{Y,Z}(G)\leq k$\ ? That is, does Alice have a winning strategy in $g_{Y,Z}$ with $k$ colors?
\item (Problem $g^*_{Y,Z}$-2) $\Gamma_g^{Y,Z}(G)=\chi(G)$\ ?
\end{itemize}

\begin{corollary}
For every $Y,Z\in\{A,B\}$, the decision problems $g^*_{Y,Z}$-1 and $g^*_{Y,Z}$-2 are PSPACE-complete.
\end{corollary}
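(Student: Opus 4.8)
\section*{Proof proposal}

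The plan is to follow the proof of Corollary~\ref{corol1}, substituting the greedy-game ingredients. Membership in PSPACE is obtained exactly as in Lemma~\ref{lem.pspace} and Theorem~\ref{teo2}: each game $g^*_{Y,Z}$ lasts exactly $n=|V(G)|$ turns and, since the color of the selected vertex is forced in the greedy game, each turn offers at most $n+1$ moves (choose one of the uncolored vertices, or pass when $Z$ permits it), so it is a polynomially bounded two-player game and lies in PSPACE by \cite{demaine09}. Since Problem $g^*_{Y,Z}$-2 is the case $k=\chi(G)$ of Problem $g^*_{Y,Z}$-1, it suffices to prove PSPACE-hardness of $g^*_{Y,Z}$-2, and for that I start from the reduction that already settles the ``$Z$ is no one'' case: the greedy-coloring-game reduction of \cite{rudini19} when $Y=A$, and the reduction of Theorem~\ref{teo2} from POS-CNF-11 when $Y=B$. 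Both produce from a logical formula $\Phi$ a graph $G$ with $\chi(G)$ equal to the target number of colors such that Alice has a winning strategy in $g^*_Y$ on $G$ with $\chi(G)$ colors if and only if Alice wins the associated logical game on $\Phi$; the only greedy-specific point, already observed in Theorem~\ref{teo2}, is that a move ``marking a variable False'' is realized by coloring the pendant vertex $\overline{x_i}$ with color~$1$ rather than by coloring $x_i$ with a fresh color.

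It remains to absorb the passing rule, and Lemma~\ref{prop1} does this uniformly in all four cases. If Alice wins the logical game, then for $Z=A$ she never passes and plays her $g^*_Y$ strategy, while for $Z=B$ she plays her $g^*_Y$ strategy and, whenever Bob passes, treats it as Bob passing in the logical game, which by Lemma~\ref{prop1} does not hurt her. Symmetrically, if Bob wins the logical game, then for $Z=B$ Bob never passes, and for $Z=A$ Bob treats each pass of Alice as a pass in the logical game, again harmless by Lemma~\ref{prop1}. Since the logical game and $g^*_{Y,Z}$ are both determined, exactly one player wins each, so Alice wins $g^*_{Y,Z}$ on $G$ with $\chi(G)$ colors if and only if Alice wins the logical game on $\Phi$. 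As $G$ is constructed from $\Phi$ in polynomial time, this is the desired reduction, whence $g^*_{Y,Z}$-2, and therefore $g^*_{Y,Z}$-1, is PSPACE-complete.

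The step I expect to be the main obstacle is not the passing bookkeeping, which is routine once Lemma~\ref{prop1} is available, but checking that the greedy constraint never creates an extra color: one must re-run the case analysis behind Theorem~\ref{teo2} to confirm that throughout the simulation every vertex of a conjunction clique of $G$ is still forced to receive a color in $\{1,\ldots,\chi(G)\}$, so that Bob cannot force color $\chi(G)+1$ by a side move unrelated to $\Phi$; and, crucially, that any simulated ``pass'' can always be realized by a legal greedy move that introduces no new color, for instance on a pendant vertex $\overline{x_i}$ or on a vertex of some clique $L_j$, whose size at least $3N$ guarantees enough uncolored vertices until all variable gadgets are decided.
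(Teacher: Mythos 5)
Your proposal is correct and follows essentially the same route as the paper: the paper's proof of this corollary is literally ``Similar to Corollary~\ref{corol1}'', i.e.\ PSPACE membership as in Lemma~\ref{lem.pspace}, hardness via the $g^*_A$ reduction of \cite{rudini19} and the $g^*_B$ reduction of Theorem~\ref{teo2}, with Lemma~\ref{prop1} absorbing the passing rule exactly as you describe. Your closing remark about re-checking that the greedy constraint and the simulated passes never force an extra color is a fair flag of a detail the paper leaves implicit, but it does not change the argument.
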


\begin{proof}[Sketch]
Similar to Corollary \ref{corol1}.
\end{proof}


\bibliographystyle{acm}

\end{document}